\newtheorem{theorem}{Theorem}
\newtheorem{remark}{Remark}
\newtheorem{definition}{Definition}
\newtheorem{assumption}{Assumption}
\newtheorem{corollary}{Corollary}
\newtheorem{lemma}{Lemma}
\newcommand{\RomanNumeralCaps}[1]
    {\MakeUppercase{\romannumeral #1}}
\title{\LARGE \bf
Output Feedback Stochastic MPC with Hard Input Constraints}
\author{Eunhyek Joa, Monimoy Bujarbaruah, and Francesco Borrelli} 
\begin{document}

\noindent

\maketitle
\thispagestyle{empty}
\pagestyle{empty}

\begin{abstract}
We present an output feedback stochastic model predictive controller (SMPC) for constrained linear time-invariant systems. 
The system is perturbed  by additive Gaussian disturbances on state and  additive Gaussian measurement noise on output. A Kalman filter is used for state estimation and a SMPC is designed
to satisfy chance constraints on states and hard constraints on actuator inputs. 
The proposed SMPC constructs bounded sets for the state evolution and a tube based constraint tightening strategy where the tightened constraints are time-invariant.



We prove that the proposed SMPC can guarantee an infeasibility rate below a user-specified tolerance.
We numerically compare our method with a classical output feedback SMPC with simulation results which highlight the efficacy of the proposed algorithm. 
\end{abstract}

\section{Introduction}
Over the last few decades, Model Predictive Control (MPC) has been successfully implemented for constrained control of dynamical systems \cite{hrovat2012development, mayne2014model, morari1999model}. 
A key challenge in MPC design is handling the presence of disturbances in the system model and noisy output feedback. Robust MPC (RMPC) is a well studied field for MPC design which tackles this challenge\cite{bemporad1999robust, mayne2014model,  mayne2006robust, mayne2015robust}.
In RMPC, the controller is designed to satisfy constraints for all possible realizations of bounded disturbances \cite{chisci2001systems,langson2004robust, mayne2005robusto} and measurement noises \cite{brunke2021rlo, mayne2006robust}.
However, due to the inherent robustness to all realizations of uncertainty and noise, RMPC is often rendered conservative \cite{lorenzen2016constraint, rosolia2018stochastic}.
For this reason, Stochastic MPC (SMPC) has been widely studied for systems which are not safety critical, and thus, state constraint violations can be allowed to a tolerable extent~\cite{hewing2020recursively,lorenzen2016constraint,mesbah2016stochastic}. SMPC provides a systematic way to consider the trade-off between satisfying imposed constraints with high probability and reducing conservative controller behavior. 

Although SMPC has been widely studied \cite{kouvaritakis2004recent,mesbah2016stochastic}, most of these approaches assume perfect state measurement, which is often not the case in most real-world systems. Output feedback SMPC has been studied in \cite{cannon2012stochastic, farina2015approach, hokayem2012stochastic}, where the approach is to design a state estimator, analyze statistics of its estimation errors, and utilize the error bounds to tighten the imposed constraints accordingly. The systems considered in \cite{cannon2012stochastic, farina2015approach, hokayem2012stochastic} are linear time-invariant (LTI) systems perturbed by bounded additive stochastic disturbances on state and additive stochastic noises on output. In \cite{cannon2012stochastic}, the method relies on knowing the supports of disturbances and noises.
However, in practice, it may not be straightforward to estimate the support \cite{ bujarbaruah2020learning_ball_in_cup}, and only the statistics of disturbances and noises might be available \cite{mesbah2016stochastic}.
In \cite{farina2015approach}, the supports of disturbance and noise are not bounded. However, inputs are not imposed on hard input constraints. 
In \cite{hokayem2012stochastic}, hard input constraints are considered without considering state constraints.

This paper focuses on  output feedback SMPC, which is designed to satisfy chance state constraints and hard input constraints, for LTI systems perturbed by additive Gaussian disturbances on state and additive measurement noises on output. 
As in \cite{hokayem2012stochastic, ridderhof2020chance}, we use a Kalman filter for state estimation. We analyze stochastic properties of two random vectors: the estimation error and the control error. 
Exploiting these properties, we construct time-invariant bounded sets for these errors dynamics so that the errors are guaranteed to remain in the bounded sets with  a user-specified bound on the probability.
We refer such sets as a \textit{Uniformly Bounded Confidence Sets}, which will be formally defined in Section~\ref{sec: UBCS}.
These sets are employed to tighten the imposed state and input constraints and design an output feedback SMPC, which guarantees recursive feasibility in probability with a user-specified bound.

Our key contributions can be summarized as follows:
\begin{itemize}
    \item A systematic way to construct \textit{Uniformly Bounded Confidence Sets} for the estimation and the control errors when system states are estimated from the system outputs using a Kalman filter.
    \item Guaranteeing a user-specified probability of success in the task using a shrinking tube based MPC \cite{chisci2001systems}, where the constraint tightening strategy uses the \textit{Uniformly Bounded Confidence Sets}.
\end{itemize}
Instead of constructing time-varying tube to tighten the constraints \cite{cannon2012stochastic, farina2015approach}, we construct rigid \textit{Uniformly Bounded Confidence Sets}.
To avoid infeasibility of the MPC, existing works assume that the supports of disturbance and noise are known \cite{cannon2012stochastic}, the input can be unbounded \cite{farina2015approach}, or the state is unconstrained \cite{hokayem2012stochastic}. 
In this work, instead of adopting additional assumptions, we tighten the constraints to guarantee recursive feasibility with a user-specified probability bound. 

\textit{Notation:} Throughout the paper we use the following notation. The positive (semi)definite matrix \(P\) is denoted as (\(P\succcurlyeq0\)) \(P\succ0\).
The Minkowski sum of two sets is denoted as \(\mathcal{X} \oplus \mathcal{Y} = \{x+y: x \in \mathcal{X}, y \in \mathcal{Y}\}\).
The Pontryagin difference between two sets is defined as \(\mathcal{X} \ominus \mathcal{Y} = \{x\in\mathcal{X}: x+y \in \mathcal{X}, \forall y \in \mathcal{Y}\}\).
The m-th row vector of a matrix \(H\) is denoted as \([H]_m\), similarly the m-th component of a column vector \(h\) is \([h]_m\). 
\(\mathbb{P}(\mathcal{A})\) is the probability of event \(\mathcal{A}\), and \(\mathbb{E}[\cdot]\) is the expectation of its argument. 
The notation \(x_{l:m}\) means the sequence of the variable \(x\) from time step \(l\) to time step \(m\). 

\section{Problem setup}
We consider an uncertain LTI system perturbed by a state additive disturbance.
We obtain noisy output feedback from the system during control synthesis. 
\subsection{System Dynamics, Measurement Model}
The system dynamics and measurement model are given as follows:
\begin{equation} \label{eq:system}
\begin{split}
    & x_{k+1}  = A x_k  + B u_k  + w_k ,\, w_k  \sim \mathcal{N}(0,Q), \\
    & y_k  = C x_k  + v_k ,\, v_k  \sim \mathcal{N}(0,R), \\
    & x_0  \sim \mathcal{N}(\mu_0 ,\Sigma_0 ),
\end{split}
\end{equation}
where \(x_k  \in \mathbb{R}^{n_x}\) is the state, \(u_k  \in \mathbb{R}^{n_u}\) is the input, and \(y_k  \in \mathbb{R}^{n_y}\) is the measurement at time step $k$. System matrices \(A,B\) and the measurement model matrix \(C\) are known. We assume that the system is controllable and observable \cite{ogata2010modern}.
At each time step \(k\), the system is affected by an independently and identically distributed (i.i.d.) Gaussian random disturbance \(w_k  \in \mathbb{R}^{n_x}\) and measurement noise \(v_k  \in \mathbb{R}^{n_y}\), both with zero mean and known covariance matrices $Q,R$, respectively. 
Moreover, the initial state of the system \(x_0 \) is a Gaussian random vector with a known mean $\mu_0$ and covariance matrix $\Sigma_0$.

\subsection{Estimator}
We estimate the state of the system for control synthesis using the following estimator:
\begin{equation} \label{eq: general estimator}
\begin{split}
    & \hat{x}_k = f_k(\hat{x}_{k-1}, y_{k}), 
\end{split}
\end{equation}
where $\hat{x}_k$ is an estimated state at time step $k$.
There are multiple choices of the estimator design, such as Luenberger observer \cite{cannon2012stochastic, farina2015approach, mayne2006robust} with a static gain or a Kalman filter \cite{ridderhof2020chance}. In this work we use a Kalman filter design. 
Since system (\ref{eq:system}) is LTI and the additive disturbance and the measurement noise are Gaussian distributed, the Kalman filter is the optimal estimator in the minimum mean-square-error sense~\cite[Chap.E.3]{bertsekas2012dynamic} for the unconstrained system.
Using the Kalman filter, the estimated state can be obtained as follows \cite{simon2006optimal}:
\begin{equation} \label{eq:kalman filter}
\begin{split}
    & \hat{x}_{k+1}^{-} = A \hat{x}_{k} + B u_{k} , \\
    & \hat{x}_{k+1} = \hat{x}_{k+1}^{-} + L_{k+1} (y_{k+1}  - C \hat{x}_{k+1}^{-}), 
\end{split}
\end{equation}
where $\hat{x}_{k}^{-}$ is \textit{a priori} state estimate at time step $k$, and \(L_{k+1}\) is the Kalman filter gain
is derived as follows \cite{simon2006optimal}:
\begin{equation} \label{eq:kalman filter_cov_gain}
\begin{split}
    & P_{k+1}^- = A P_k A^\top + Q, \\
    & P_{k+1}  = (I - L_{k+1} C) P_{k+1}^-, \\
    & L_{k+1} = P_{k+1}^- C^\top (CP_{k+1}^-C^\top + R)^{-1}, 
\end{split}
\end{equation}
where \(P_{k} :=\mathbb{E}[(x_k - \hat{x}_{k} ) (x_k - \hat{x}_{k} )^\top|y_{0:k}]\) and \(P_{k}^-:=\mathbb{E}[(x_k - \hat{x}_{k}^-) (x_k - \hat{x}_{k}^-)^\top|y_{0:k}]\).
The Kalman filter is initialized based on the known statistics of the initial condition in (\ref{eq:system}):
\begin{equation} \label{eq:kalman filter_initialization}
\begin{split}
    & \hat{x}_0^{-} = \mu_0, \,\,  P_0^{-}=\Sigma_0.
\end{split}
\end{equation}

\subsection{Constraints}
System (\ref{eq:system}) is subject to the following constraints at all time steps:
\begin{equation} \label{eq:constraints}
\begin{split}
    & \qquad \, \, \mathbb{P}(x_{k} \in \mathcal{X}) \geq 1-p_x~\forall k \geq 0 ,\\
    & \qquad \, \, u_k \in \mathcal{U}, ~\forall k \geq 0, 
\end{split}
\end{equation}
where \(\mathcal{X}\) is a convex polyhedron, \(\mathcal{U}\) is a polytope, and $p_x \in (0,1)$ is the allowable violation probability of the state constraints $\mathcal{X}$ by the system~(\ref{eq:system}) in closed-loop with an output-feedback controller.

\subsection{Optimal Control Problem}
We want to compute solutions to the following finite time optimal control problem:
\begin{equation} \label{eq:ftocp}
\begin{split}
    & V (\mu_0, \mathcal{F}) = \\
    & \min_{U(\cdot)} \,\, \sum_{k=0}^{T-1} \ell(\hat{x}_k , \pi_k (\hat{x}_k)) \\
    & \,\,\, \textnormal{s.t.,} \,\,\,  x_{k+1}  = A x_k  + B \pi_k (\hat{x}_k) + w_k ,\\
    & \qquad \, \, y_k  = C x_k  + v_k ,  \\
    & \qquad \, \, \hat{x}_{k+1} = f_k(\hat{x}_{k}, y_{k+1}), \\
    & \qquad \, \, \hat{x}_{0} = \mu_0 + L_0 (y_0 - C \mu_0), \\
    & \qquad \, \, \mathbb{P}(x_{k}  \in \mathcal{X}) \geq 1-p_x, \, \pi_k (\hat{x}_k) \in \mathcal{U},\\
    & \qquad \, \, x_0  \sim \mathcal{N}(\mu_0 ,\Sigma_0 ), \\ 
    & \qquad \, \, w_k  \sim \mathcal{N}(0,Q), \,\, v_k  \sim \mathcal{N}(0,R),\\
    & \qquad \, \, k=0,...,T-1, 
\end{split}
\end{equation}
where $T\gg0$ is the task horizon and $\mathcal{F}$ is a set of given time-varying state estimator functions $\{f_0(\cdot,\cdot),f_1 (\cdot,\cdot),...,f_{T-1} (\cdot,\cdot)\}$.
The optimal control problem (\ref{eq:ftocp}) minimizes the sum of the positive definite state cost $\ell(\cdot,\cdot)$ evaluated for the estimated state trajectory.
We point out that as the system (\ref{eq:system}) is uncertain and provides noisy output feedback $y_{0:k}$, the optimal control problem (\ref{eq:ftocp}) consists of finding state feedback policies \(U (\cdot) = \{\pi_0(\cdot),\pi_1(\cdot),...,\pi_{T-1}(\cdot)\} \).
\begin{remark}
The existing works \cite{cannon2012stochastic,farina2015approach,hewing2020recursively} have chance constraints on inputs. We consider hard input constraints in \eqref{eq:ftocp}, which are more common in practical applications.  
\end{remark}

\subsection{Solution Approach to (\ref{eq:ftocp})}
\label{sub: Solution Approach to ftocp}
There are two main challenges in solving \eqref{eq:ftocp}, namely: 
\begin{enumerate}[(C1)]
    \item Optimizing over control policies \( \{\pi_0(\cdot),\pi_1 (\cdot),...\} \) involves an infinite dimensional optimization.
    \item Solving the problem \eqref{eq:ftocp} for $T \gg 0$ is computationally demanding.
\end{enumerate}
We address (C1) by approximating control policy to an affine estimated state feedback policy with a fixed control gain as in \cite{cannon2012stochastic, farina2015approach, hewing2020recursively}. That is,
\begin{equation} \label{eq:affine control}
\begin{split}
    \pi_{i|k}(x) = c_{i|k} + K(x - \bar{x}_{i|k}),
\end{split}
\end{equation}
where $\bar{x} _{i|k}$ is nominal state, $c _{i|k}$ is an auxiliary input, and $\pi_{i|k}(\cdot)$ is the control policy at predicted time step $k+i$. \(K\) is a control gain chosen such that \(A_\mathrm{cl} := A+BK\) is Schur.
We address (C2) by solving a simpler constrained optimal control problem with prediction horizon $N \ll T$ in a receding horizon fashion.
Specifically, we attempt to design an MPC controller by solving:
\begin{equation} \label{eq:MPC}
\begin{split}
    & \bar{V}_{k \rightarrow k+N} (\hat{x}_k) = \\
    & \min_{\substack{c _{0|k},...c _{N-1|k}}} \sum_{i=0}^{N-1} \ell(\bar{x} _{i|k}, c _{i|k}) + O (\bar{x} _{N|k})\\
    & \qquad \textnormal{s.t.,} \quad  \quad \bar{x} _{i+1|k} = A \bar{x} _{i|k} + B c _{i|k}, \\
    & \quad \qquad \qquad \, \, \bar{x} _{0|k} = \hat{x}_k,\\
    & \quad \qquad \qquad \, \,  \bar{x} _{i|k} \in \Bar{\mathcal{X}}_{\mathrm{RF}, i}, \\
    & \quad \qquad \qquad \, \,  c _{i|k} \in \Bar{\mathcal{U}}_{\mathrm{RF},i},\\
    & \quad \qquad \qquad \, \, \bar{x} _{N|k} \in \Bar{\mathcal{X}} _{f,\mathrm{RF}},\\
    & \quad \qquad \qquad \, \, i=0,1,...,N-1,
\end{split}
\end{equation}
where $\hat{x}_k$ is obtained from the Kalman filter \eqref{eq:kalman filter} and $O(\cdot)$ is a terminal cost.
As $\pi_{i|k}(\bar{x}_{i|k}) = c_{i|k}$, the sequence $\{c _{0|k},...c _{N-1|k}\}$ is optimized. 
In \eqref{eq:MPC}, state constraints $\Bar{\mathcal{X}}_{\mathrm{RF}, i}$ and input constraints $\Bar{\mathcal{U}}_{\mathrm{RF},i}$ are obtained by tightening the original constraints \eqref{eq:constraints} with \textit{Uniformly Bounded Confidence Sets}. The construction of these sets is  discussed in detail in Section~\RomanNumeralCaps{3}.
The choice of the terminal set $\bar{\mathcal{X}}_{f,\mathrm{RF}}$ is explained in Section \RomanNumeralCaps{4}.
After solving \eqref{eq:MPC}, we apply
\begin{equation} \label{eq:MPC policy}
    \pi_k(\hat{x}_k) = \pi_{0|k}(\hat{x}_k) = c^\star_{0|k} 
\end{equation}
to system \eqref{eq:system} in closed-loop. 

\begin{remark}
In  \eqref{eq:MPC} we find an optimal input sequence $c^\star_{i|k}$ based on the nominal system initialized with the estimated state. 
Thus, to satisfy \eqref{eq:constraints}, we need to consider two errors predicted along the control horizon: (1) error between the actual and the estimated states, (2) error between the estimated and the nominal states.
The same approach has been followed in \cite{mayne2005robusto, cannon2012stochastic, farina2015approach, hokayem2012stochastic}.
In Section \ref{sub:Uniformly Bounded Confidece Set} we propose a novel approach which constructs a time-invariant polytope set for each error type such that for all time step $k \in \{0,...,T-1\}$, each error is in the corresponding set with probability no smaller than a user-defined value.
In Section \RomanNumeralCaps{4} we tighten the constraints \eqref{eq:constraints} with these time-invariant sets in a \textit{shrinking tube} way \cite{chisci2001systems} so that imposing $\Bar{\mathcal{X}}_{\mathrm{RF}, i}$ on the nominal state $\bar{x}_{i|k}$ is sufficient to satisfy the constraints \eqref{eq:constraints}.  
For this reason the MPC formulation in \eqref{eq:MPC} and the policy in \eqref{eq:MPC policy} do not include predicted errors and predicted input policies along the horizon.
\end{remark}

Since the supports of the additive disturbance \(w_k \) and measurement noise \(v_k \) are unbounded, and the control inputs \(u_k \) are bounded, this can result in the loss of feasibility of (\ref{eq:MPC}).
Accordingly, we consider the following design specifications.
\begin{enumerate}[(D1)]
    \item (Safety) If the MPC \eqref{eq:MPC} is feasible at time step $k$ and the closed-loop control law \eqref{eq:MPC policy} is applied to the system \eqref{eq:system}, then the satisfaction of the constraints in \eqref{eq:constraints} is guaranteed at time $k$.
    \item (Regulation of the loss of feasibility) If the MPC \eqref{eq:MPC} is feasible at time step $k$ and the closed-loop control law \eqref{eq:MPC policy} is applied to the system \eqref{eq:system}, the MPC \eqref{eq:MPC} at time step $k+1$ is feasible with probability no smaller than a user specified bound $1-p_f \in (0,1)$.
\end{enumerate}
Note that (D2) implies that the MPC \eqref{eq:MPC} is infeasible with probability less than $p_f$ at each time step, if \eqref{eq:MPC} is feasible at the previous time step. Therefore for the given task with \(T\) horizon, the probability of success of the task is \((1-p_f)^{T-1}\), if \eqref{eq:MPC} is feasible at $k=0$.

\section{Constructing Uniformly Bounded Confidence Sets} \label{sec: UBCS}
In this section, we define the term \textit{Uniformly Bounded Confidence Set} and propose a systematic way to construct \textit{Uniformly Bounded Confidence Sets} for the estimation and the control errors when actual states are estimated using the Kalman filter. This is our first contribution of the paper. 

We define a \textit{Uniformly Bounded Confidence Sets} of the given random variable as follows:
\begin{definition} \label{def:uniform confidence set}
(Uniformly Bounded Confidence Set)
A set \(\mathcal{E}_{p}^{r}\) is a uniformly bounded confidence set of probability level \(p\) for a random variable \(r_k\) if
\begin{equation} \label{eq:uniform confidence set}
\begin{split}
    & \mathbb{P}(r_k \in \mathcal{E}_{p}^{r}) \geq p, \, \forall k \geq 0. 
\end{split}
\end{equation}
\end{definition}
In next two subsections \ref{sub:Estimation error} and \ref{sub:Additive disturbance in control error dynamics}, we introduce two error random variables for which we want to construct \textit{Uniformly Bounded Confidence Sets}. 
\subsection{Estimation Error} \label{sub:Estimation error}
In this subsection, we introduce the notion of estimation error, which is the error between the actual and estimate states.
We derive the time evolution dynamics of the estimation error based on the Kalman filter (\ref{eq:kalman filter}) and the system (\ref{eq:system}) as follows:
\begin{equation} \label{eq:estimation error dynamics}
\begin{split}
    & e_k  := x_k  - \hat{x}_{k}, \\
    & e_{k+1}  = (I - L_{k+1} C) A e_k  \\ 
    & \qquad \quad + (I - L_{k+1} C) w_k  - L_{k+1} v_{k+1}. 
\end{split}
\end{equation}
Note that \(\mathbb{E}[e_k  \hat{x}_{k}^{\top}|y_{0:k}]=0\) and \(\mathbb{E}[e_k|y_{0:k}]=0\) following~\cite[Cor.E.3.2]{bertsekas2012dynamic}.
Thus, the covariance matrix of the estimation error $e_k$ is  $\mathbb{E}[e_k e_k^\top|y_{0:k}] = \mathbb{E}[(x_k - \hat{x}_{k} ) (x_k - \hat{x}_{k} )^\top|y_{0:k}] =  P_{k}$. Though the control inputs \eqref{eq:MPC policy} will be nonlinear functions of the estimated state \cite{maciejowski2002predictive}, the effect of the control inputs is cancelled out in the dynamics of \(e_k \), and \(e_k \) is a linear sum of the independent normal random vectors as shown in (\ref{eq:estimation error dynamics}).
Therefore, the conditional probability distribution of the estimation error \(e_k \) given the measurements up to time step \(k\), \(y_{0:k} \) is given by:
\begin{equation} \label{eq:cond dist plus of e}
\begin{split}
    & p(e_k |y_{0:k} ) \sim \mathcal{N}(0, P_k). 
\end{split}
\end{equation}

\subsection{Additive Disturbance in Control Error Dynamics} \label{sub:Additive disturbance in control error dynamics}
In this subsection, we introduce the notion of control error (the error between the estimated and the nominal states), derive its dynamics, and analyze the stochastic property of the additive disturbance on the control error.

First, we derive the dynamics of the estimated state $\hat{x}_{k}$ from the Kalman filter (\ref{eq:kalman filter}) as follows:
\begin{equation} \label{eq:estimated state dynamics}
\begin{split}
    & n_k  := L_{k+1} C A e_k  + L_{k+1} C w_k  + L_{k+1} v_{k+1} , \\
    & \hat{x}_{k+1}  = A\hat{x}_k + B u_k + n_k , 
\end{split}
\end{equation}
where \(n_k\) can be interpreted as an additive disturbance to the estimator dynamics.
As we will present later, \(n_k\) is also the additive disturbance on the control error.

Second, we analyze the stochastic property of the additive disturbance $n_k$.
\(n_k\) is a linear combination of three independent Gaussian random variables:~\(e_k\),~\(w_k\), and \(v_k\)  as shown in (\ref{eq:estimated state dynamics}).
Note that these are independent because \(w_k\) and \(v_k\) are i.i.d. Gaussian random vectors. For these reasons, \(n_k\) is a Gaussian random variable itself. The mean of \(n_k \) is zero, i.e., \(\mathbb{E}[n_k |y_{0:k} ]=0\), because \(\mathbb{E}[e_k |y_{0:k} ]=\mathbb{E}[w_k ]=\mathbb{E}[v_k ]=0\). 
The covariance matrix of \(n_k\) can be derived as follows:
\begin{equation} \label{eq:noise covariance dynamics}
\begin{split}
    & \Phi_k  := \mathbb{E}[n_k  n_k^{\top} |y_{0:k}], \\
    &  \Phi_k  = L_{k+1} (C (A P_{k}  A^\top + Q) C^\top + R) L_{k+1}^\top. 
\end{split}
\end{equation}
Thus, \(p(n_k |y_{0:k} ) \sim \mathcal{N}(0,\Phi_k )\).
Note that the mean and the covariance of $p(n_k |y_{0:k})$ are not a function of $y_{0:k}$.

Third, based on the dynamics of the nominal system in \eqref{eq:MPC}, the dynamics of the estimated state in \eqref{eq:estimated state dynamics}, the affine control policy in \eqref{eq:affine control}, and the conditional distribution of $n_k$ in \eqref{eq:noise covariance dynamics}, we calculate the estimated state $\hat{x}_{i|k}$ at predicted time step $k+i$ for $i\geq0$ as follows:
\begin{subequations} \label{eq:predicted estimated state dynamics}
\begin{align}
    & \hat{x}_{0|k} = \hat{x}_{k}, \label{subeq: estimated state initialization} \\
    & \hat{x}_{i+1|k}  = A\hat{x}_{i|k} + B\pi_{i|k}(\hat{x}_{i|k}) + n_{i|k},  \label{subeq: estimated state dynamics}\\
    & n_{i|k} \sim p(n_{k+i}|y_{0:k+i}), \label{subeq: estimated state disturbance}
\end{align}
\end{subequations}
where $\hat{x}_{0|k}$ is initialized with $\hat{x}_{k}$ as shown in \eqref{subeq: estimated state initialization}.
For each predicted time step $k+i$, we draw $n_{i|k}$ from $p(n_k+i|y_{0:k+i})$ as shown in \eqref{subeq: estimated state disturbance}.
Note that we can utilize $p(n_k+i|y_{0:k+i})$ without information of future measurements $y_{k+1:k+i}$ because the mean and covariance of Gaussian distribution $p(n_{k+i}|y_{0:k+i})$ are not functions of $y_{0:k+i}$.

Finally, we introduce the control error and explain why the stochastic property of the additive disturbance $n_k$ matters. 
We define the control error and derive its dynamics using the nominal state dynamics in \eqref{eq:MPC} and the predicted estimated state dynamics in \eqref{eq:predicted estimated state dynamics} as follows:
\begin{equation} \label{eq:control error dynamics}
\begin{split}
    & \Delta_{i|k}  := \hat{x}_{i|k}  - \bar{x}_{i|k} , \\
    & \Delta_{i+1|k}  = A_\mathrm{cl}\Delta_{i|k}  + n_{i|k}, \,\, n_{i|k} \sim p(n_{k+i}|y_{0:k+i}). \\ 
\end{split}
\end{equation}
As we initialize both $\bar{x}_{0|k}$ and $\hat{x}_{0|k}$ with the state estimate $\hat{x}_k$ in \eqref{eq:MPC} and \eqref{eq:predicted estimated state dynamics}, $\Delta_{0|k}= 0$ and $\Delta_{i+1|k} = {\scriptstyle\sum}_{q=0}^{i} A_\mathrm{cl}^{i-q} n_{q|k}$ based on \eqref{eq:control error dynamics}. 
As the predicted control error $\Delta_{i+1|k}$ is a sum of the additive disturbance of \eqref{eq:control error dynamics}, we analyzed the stochastic property of the additive disturbance $n_k$ and we will construct a \textit{Uniformly Bounded Confidence Set} of $n_k$ in Section~\RomanNumeralCaps{3}.D. 
In Section~\RomanNumeralCaps{4}, we will tighten constraints using this \textit{Uniformly Bounded Confidence Set} of $n_k$ so that we can bound predicted control errors $\Delta_{i|k}$ around the predicted nominal state $\bar{x}_{i|k}$ with no smaller than a user-defined probability.

\subsection{Uniform upper bounds to the covariance matrices}
\label{sub:Uniform upper bound}
We derive \(P\) and \(\Phi\) such that \(P_k \preccurlyeq P\) and \(\Phi_k \preccurlyeq \Phi\) for all $k \in \{0,1,...,T-1\}$.
We refer \(P\) and \(\Phi\) to the uniform upper bound to $P_k$ and $\Phi_k$ for all $k \in \{0,1,...,T-1\}$, respectively.  
We present two distinct choices to derive \(P\) and \(\Phi\).

\subsubsection*{Solving L\"{o}wner-John ellipsoid problem \cite{boyd2004convex}} First, we can numerically calculate both uniform upper bounds \(P\) and \(\Phi\) by solving the L\"{o}wner-John ellipsoid problem \cite{boyd2004convex}.
We refer these bounds as \(P_\mathrm{LJ}\) and \(\Phi_\mathrm{LJ}\).
As the horizon of the given task \eqref{eq:ftocp} is finite, the number of $P_k $ and $\Phi_k, \, \forall k \in \{0,...,T-1\}$ is finite.
Thus, finding an uniform upper bound to $P_{0:T-1} $ or $\Phi_{0:T-1}$ can be reformulated as a e L\"{o}wner-John ellipsoid problem \footnote{The formulation of the L\"{o}wner-John ellipsoid problem is presented in the Appendix.} \cite{boyd2004convex}: finding minimal volume ellipsoid $x^\top {P}^{-1}_\mathrm{LJ} x \leq 1$ ($x^\top \Phi^{-1}_\mathrm{LJ} x \leq 1$) which contains all given ellipsoids $x^\top {P_k }^{-1} x \leq 1 \, (x^\top \Phi_k^{-1} x \leq 1) \, \forall k \in \{0,...,T-1\}$.
If some covariance matrices are positive semidefinite, then one can add $\epsilon \rm{I}$ to those matrices.

\subsubsection*{Mathematical derivation} 
In additon to the above approach,  uniform upper bounds \(P\) and \(\Phi\) can be also derived explicitly under the  assumptions introduced next. We refer to these bounds as \(P_\mathrm{MD}\) and \(\Phi_\mathrm{MD}\), respectively.
\begin{assumption} \label{assum: initial condition}
We assume that the covariance matrix of the initial state in (\ref{eq:system}), which is also initial covariance matrix of the Kalman filter in (\ref{eq:kalman filter_initialization}), satisfies the following condition:
\begin{equation} \label{eq:initial covariance bnd}
\Sigma_0 = P_0^{-} \preccurlyeq P_{\infty}
\end{equation}
where $P_{\infty}$ is a solution of the following equation,
\begin{equation} \label{eq:steady state covariance}
P_{\infty} = A (P_{\infty}-P_{\infty}C^T(CP_{\infty}C^T+R)^{-1}CP_{\infty})A^T +Q.
\end{equation}
\end{assumption}
\begin{assumption} \label{assum: system matrix}
The system matrix $A$ is nonsingular.
\end{assumption}

Under assumptions \ref{assum: initial condition}-\ref{assum: system matrix}, we derive a uniform bound \(P_\mathrm{MD}\) to \(P_k \) for any $k \geq 0$.
To do that, we obtain a uniform upper bound to \(P_{k}^{-}\) first.
\begin{lemma} \label{lem:ub of Pminus}
Let Assumption \ref{assum: initial condition} holds. Then, \(P_{\infty}\) is a uniform upper bound to \(P_{k}^{-}\), i.e., \(P_k^- \preccurlyeq P_{\infty}, \, \forall k \geq 0\).
\end{lemma}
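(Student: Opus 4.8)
The plan is to observe that the \emph{a priori} covariance recursion in \eqref{eq:kalman filter_cov_gain} is the composition of two maps that preserve the positive semidefinite order, and that $P_\infty$ is a fixed point of this composition; a one–line monotone induction then yields the claim. Concretely, I would write the one–step update as $P_{k+1}^- = \rho(P_k^-)$, where $\rho(X) := A\,g(X)\,A^\top + Q$ and $g(X) := X - X C^\top (C X C^\top + R)^{-1} C X$ is the measurement update. Indeed, \eqref{eq:kalman filter_cov_gain} gives $P_k = (I - L_{k+1}C)P_k^- = g(P_k^-)$, hence $P_{k+1}^- = A\,g(P_k^-)\,A^\top + Q$, and comparing with \eqref{eq:steady state covariance} shows that $P_\infty$ is a fixed point, $\rho(P_\infty) = P_\infty$.

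Granting monotonicity of $\rho$, the induction is immediate: the base case $P_0^- = \Sigma_0 \preccurlyeq P_\infty$ is Assumption~\ref{assum: initial condition}, and if $P_k^- \preccurlyeq P_\infty$ then $P_{k+1}^- = \rho(P_k^-) \preccurlyeq \rho(P_\infty) = P_\infty$. Thus the entire content of the lemma reduces to showing that $g$, and therefore $\rho$, is order preserving on the cone $\{X \succcurlyeq 0\}$; this is the step I expect to be the only real obstacle.

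To prove monotonicity of $g$ I would avoid the information form $g(X) = (X^{-1} + C^\top R^{-1} C)^{-1}$, which would require $X$ invertible, and use the variational (Joseph) characterization instead. Setting $L(X) := X C^\top (C X C^\top + R)^{-1}$ (well defined because $C X C^\top + R \succ 0$) and $J_X(\tilde L) := (I - \tilde L C) X (I - \tilde L C)^\top + \tilde L R \tilde L^\top$, one verifies the identity $g(X) = J_X(L(X))$ and, for any gain $\tilde L$, that $J_X(\tilde L) - g(X) = (\tilde L - L(X))(C X C^\top + R)(\tilde L - L(X))^\top \succcurlyeq 0$. Hence $g(X)$ is the minimum, in the semidefinite order, of the Joseph form over all gains, i.e.\ $g(X) = \min_{\tilde L} J_X(\tilde L)$.

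Then for $X \preccurlyeq Y$ I would use the gain $L(Y)$ that is optimal for $Y$, which gives the chain $g(X) \preccurlyeq J_X(L(Y)) \preccurlyeq J_Y(L(Y)) = g(Y)$: the first inequality is optimality of $g(X)$, the second uses $X \preccurlyeq Y$ together with the fact that a congruence $M \mapsto S M S^\top$ preserves the semidefinite order, and the equality is the Joseph identity at the optimal gain. Finally $\rho$ inherits monotonicity because $Z \mapsto A Z A^\top + Q$ is obviously order preserving, which closes the induction. Note that Assumption~\ref{assum: system matrix} is not needed for this particular result.
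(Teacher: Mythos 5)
Your proof is correct, but it takes a genuinely different route from the paper's. The paper's proof is essentially a citation: it considers the suboptimal constant-gain filter with gain $L = P_{\infty}C^\top(CP_{\infty}C^\top+R)^{-1}$, for which $P_\infty$ is stationary, and then invokes Theorem~2.1 of Caines and Mayne (1970) --- a comparison theorem for the discrete-time Riccati recursion --- to conclude that $P_0^- \preccurlyeq P_\infty$ implies $P_k^- \preccurlyeq P_\infty$ for all $k \geq 0$. You instead prove the needed comparison property from scratch: you write the \emph{a priori} recursion as $P_{k+1}^- = \rho(P_k^-)$ with $\rho(X) = A\,g(X)\,A^\top + Q$, identify $P_\infty$ as a fixed point of $\rho$ by matching \eqref{eq:steady state covariance}, and establish monotonicity of $g$ (hence $\rho$) via the Joseph-form identity $J_X(\tilde L) = g(X) + (\tilde L - L(X))(CXC^\top + R)(\tilde L - L(X))^\top$, which yields the chain $g(X) \preccurlyeq J_X(L(Y)) \preccurlyeq J_Y(L(Y)) = g(Y)$ for $X \preccurlyeq Y$ by substituting the suboptimal gain $L(Y)$. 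This is, in substance, a self-contained proof of exactly the monotonicity fact the paper outsources to the cited theorem; indeed the paper's comparison against a fixed-gain filter is the same underlying idea, left implicit in the reference. What your version buys is self-containedness, a correct avoidance of the information form $(X^{-1}+C^\top R^{-1}C)^{-1}$ (so no invertibility of $X$ is needed, only $CXC^\top + R \succ 0$, which holds since $R \succ 0$), and the explicit --- and correct --- observation that Assumption~\ref{assum: system matrix} plays no role here, consistent with the lemma statement invoking only Assumption~\ref{assum: initial condition}; what the paper's version buys is brevity. One cosmetic slip: in your measurement update the gain index should be $L_k$, i.e.\ $P_k = (I - L_k C)P_k^-$ per \eqref{eq:kalman filter_cov_gain}, but this does not affect the argument since your composed recursion $P_{k+1}^- = A\,g(P_k^-)\,A^\top + Q$ is the correct one.
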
  
\begin{proof}
From \eqref{eq:kalman filter_initialization}, $P_0^- = \Sigma_0$.
Consider the Kalman filter whose initial covariance matrix is set to $P_{\infty}$ and gain is constant, i.e., \(L = P_{\infty}C^T(CP_{\infty}C^T+R)^{-1}\).
Since the \(L\) and \(P_{\infty}\) are the steady-state solution of the Kalman filter, the values of \(L\) and \(P_{\infty}\) remain constant over time.
Then, by using Theorem 2.1 from \cite{caines1970discrete}, if \(P_0^{-} \preccurlyeq P_{\infty}\), then \(P_{k}^{-} \preccurlyeq P_{\infty}, \, \forall k \geq 0\).
\end{proof}
Note that, from the Theorem 2.2 in \cite{caines1970discrete}, \(P_{k}^- \to P_{\infty}\) as \(k \to \infty\).
Based on the Lemma \ref{lem:ub of Pminus}, we derive \(P_\mathrm{MD}\).
\begin{lemma} \label{lem:ub of Pplus}
(Uniform upper bound to $P_k $)
Let Assumption \ref{assum: initial condition}-\ref{assum: system matrix} holds.
Then, we have that
\begin{equation} \label{eq:ub of P_plus}
\begin{split}
    & P_{k}  \preccurlyeq P_\mathrm{MD}, \, \forall k \geq 0,\\
    & P_\mathrm{MD}  = A^{-1}(P_{\infty}-Q)A^{-\top}.
\end{split}
\end{equation}
\end{lemma}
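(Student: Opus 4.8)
The plan is to exploit the time-update (prediction) step of the Kalman filter, which directly links the a posteriori covariance $P_k$ to the a priori covariance one step ahead, $P_{k+1}^-$, for which Lemma~\ref{lem:ub of Pminus} already supplies the uniform bound $P_\infty$. Concretely, from the prediction recursion $P_{k+1}^- = A P_k A^\top + Q$ in \eqref{eq:kalman filter_cov_gain} I would first isolate $A P_k A^\top = P_{k+1}^- - Q$. Since Assumption~\ref{assum: system matrix} guarantees that $A$ is nonsingular, $A^{-1}$ exists and I may write the exact identity
\begin{equation*}
P_k = A^{-1}\bigl(P_{k+1}^- - Q\bigr)A^{-\top}, \quad \forall k \geq 0.
\end{equation*}

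Next I would bound the right-hand side. By Lemma~\ref{lem:ub of Pminus} we have $P_{k+1}^- \preccurlyeq P_\infty$ for every $k \geq 0$ (the shifted index $k+1 \geq 1$ is certainly covered), and subtracting the fixed matrix $Q$ preserves the Loewner order, giving $P_{k+1}^- - Q \preccurlyeq P_\infty - Q$. The order $\preccurlyeq$ is invariant under congruence: for any $M \preccurlyeq N$ and any matrix $S$ one has $S(N-M)S^\top \succcurlyeq 0$, hence $S M S^\top \preccurlyeq S N S^\top$. Taking $S = A^{-1}$ and applying this to $P_{k+1}^- - Q \preccurlyeq P_\infty - Q$ yields
\begin{equation*}
P_k = A^{-1}\bigl(P_{k+1}^- - Q\bigr)A^{-\top} \preccurlyeq A^{-1}\bigl(P_\infty - Q\bigr)A^{-\top} = P_\mathrm{MD},
\end{equation*}
which is exactly the claim.

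I expect the proof to be short, so the only real obstacle is conceptual rather than technical: recognizing that the correct quantity to bound is the a priori covariance $P_{k+1}^-$ (handled by Lemma~\ref{lem:ub of Pminus}) rather than $P_k^-$, and that the prediction equation must be inverted, which is precisely why Assumption~\ref{assum: system matrix} (nonsingular $A$) is required and why it appears among the hypotheses. As a consistency check I would observe that the steady-state equation \eqref{eq:steady state covariance} gives $P_\infty - Q = A\bigl(P_\infty - P_\infty C^\top(C P_\infty C^\top + R)^{-1} C P_\infty\bigr)A^\top \succcurlyeq 0$, so that $P_\mathrm{MD} \succcurlyeq 0$ is a genuine covariance-type bound; this is reassuring but not logically needed for the inequality itself.
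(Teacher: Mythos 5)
Your proof is correct, and it uses the same two ingredients as the paper's proof --- the prediction recursion $P_{k+1}^- = A P_k A^\top + Q$ together with the bound $P_{k+1}^- \preccurlyeq P_\infty$ from Lemma~\ref{lem:ub of Pminus}, with Assumption~\ref{assum: system matrix} letting you solve for $P_k$ --- but you run the argument directly, whereas the paper argues by contradiction: it assumes $\exists k$ with $P_\mathrm{MD} \prec P_k$, propagates through the recursion to get $P_{k+1}^- \succ A P_\mathrm{MD} A^\top + Q = P_\infty$, and contradicts Lemma~\ref{lem:ub of Pminus}. Your direct route is actually the more rigorous of the two: the Loewner order is only a partial order, so the negation of $P_k \preccurlyeq P_\mathrm{MD}$ is \emph{not} $P_\mathrm{MD} \prec P_k$ (the difference may simply be indefinite), which means the paper's contradiction argument as written only rules out one way the claim could fail. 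Your version --- invert the exact identity $P_k = A^{-1}(P_{k+1}^- - Q)A^{-\top}$, then use that subtraction of a fixed matrix and congruence by $A^{-1}$ both preserve $\preccurlyeq$ --- establishes the bound for every $k$ without negating anything, and so closes that gap. Your closing sanity check that $P_\infty - Q \succcurlyeq 0$ via the Riccati equation \eqref{eq:steady state covariance} is a nice touch the paper omits, confirming $P_\mathrm{MD}$ is a legitimate covariance-type bound.
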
  
\begin{proof}
(\ref{eq:ub of P_plus}) can be proved by contradiction.
Assume  \(\exists k \geq 0, \, P_\mathrm{MD}  \prec P_{k} \).
Then, from the Kalman filter covariance dynamics \cite{simon2006optimal}, $P_{k+1}^{-} = A P_k  A^\top + Q$.
By the assumption, $A P_k  A^\top + Q \succ A P  A^\top + Q = P_{\infty}$.
This means $P_{k+1}^{-} \succ P_{\infty}$ which contradicts \textit{Lemma} \ref{lem:ub of Pminus}.  
Therefore, $P_{k}  \preccurlyeq P_\mathrm{MD} , \forall k \geq 0$.
\end{proof}

Next we calculate an uniform upper bound to \(\Phi_k \) during the task, i.e. \(\Phi_k  \preccurlyeq \Phi_\mathrm{MD}, \forall k \geq 0\).
\begin{lemma} \label{lem:ub of phi}
(Uniform upper bound to $\Phi_k$)
Let Assumption \ref{assum: initial condition} hold.  
Then, we have that
\begin{equation} \label{eq:ub of phi_k}
\begin{split}
    & \Phi_k \preccurlyeq P_{\infty} = \Phi_\mathrm{MD}, \forall k \geq 0.\\
\end{split}
\end{equation}
\end{lemma}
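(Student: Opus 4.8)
The plan is to show that $\Phi_k$ coincides \emph{exactly} with the covariance reduction produced by the Kalman measurement update, and then to bound that reduction by the \emph{a priori} covariance $P_{k+1}^-$, which Lemma~\ref{lem:ub of Pminus} already controls. First I would rewrite $\Phi_k$ from \eqref{eq:noise covariance dynamics} using the identity $A P_k A^\top + Q = P_{k+1}^-$ taken from \eqref{eq:kalman filter_cov_gain}, which collapses the inner expression and gives $\Phi_k = L_{k+1}(C P_{k+1}^- C^\top + R) L_{k+1}^\top$.

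Next I would substitute the explicit Kalman gain $L_{k+1} = P_{k+1}^- C^\top (C P_{k+1}^- C^\top + R)^{-1}$. Writing $S := C P_{k+1}^- C^\top + R$, which is symmetric, the middle factor telescopes since $S^{-1} S\, S^{-1} = S^{-1}$, so that $\Phi_k = P_{k+1}^- C^\top S^{-1} C P_{k+1}^- = L_{k+1} C P_{k+1}^-$. Comparing this with the covariance update $P_{k+1} = (I - L_{k+1} C) P_{k+1}^-$ in \eqref{eq:kalman filter_cov_gain}, I obtain the clean identity $\Phi_k = P_{k+1}^- - P_{k+1}$.

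From here the bound is immediate. Since $P_{k+1}$ is a covariance matrix it is positive semidefinite, $P_{k+1} \succcurlyeq 0$, whence $\Phi_k = P_{k+1}^- - P_{k+1} \preccurlyeq P_{k+1}^-$. Applying Lemma~\ref{lem:ub of Pminus}, which yields $P_j^- \preccurlyeq P_\infty$ for every index $j \geq 0$ and in particular for $j = k+1$, I would conclude $\Phi_k \preccurlyeq P_{k+1}^- \preccurlyeq P_\infty = \Phi_\mathrm{MD}$ for all $k \geq 0$.

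The only real subtlety, and the step I expect to require the most care, is the algebraic simplification that identifies $\Phi_k$ with the exact covariance decrement $P_{k+1}^- - P_{k+1}$; once this standard Kalman-filter identity is in hand, positive semidefiniteness of $P_{k+1}$ together with the already-proved Lemma~\ref{lem:ub of Pminus} closes the argument. Notably, this route needs only Assumption~\ref{assum: initial condition} through Lemma~\ref{lem:ub of Pminus} and does not invoke Assumption~\ref{assum: system matrix}, consistent with the hypotheses of the statement.
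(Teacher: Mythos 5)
Your proof is correct and takes essentially the same route as the paper's: both reduce the claim to the standard Kalman identity expressing $\Phi_k$ as the measurement-update covariance decrement (the paper cites it from the literature, with an index shift, as $P_k = P_k^- - \Phi_k$; with the paper's own indexing it is exactly your $\Phi_k = P_{k+1}^- - P_{k+1}$), then invoke $P_{k+1} \succcurlyeq 0$ and Lemma~\ref{lem:ub of Pminus} to get $\Phi_k \preccurlyeq P_{k+1}^- \preccurlyeq P_{\infty}$. Your write-up is simply more explicit, deriving the identity from the gain formula rather than citing it, and correctly observing that Assumption~\ref{assum: system matrix} is not needed.
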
  
\begin{proof}
From the Kalman filter covariance update equation in \cite{simon2006optimal}, $P_k  = P_{k}^{-} - \Phi_k, \forall k \geq 0$.
Since  $P_k $ is positive semi-definite by definition of covariance matrix, we have $P_{k}^{-} - \Phi_k \succcurlyeq 0, \forall k \geq 0$.
By the lemma \ref{lem:ub of Pminus}, $P_{\infty} \succcurlyeq \Phi_k , \forall k \geq 0$.
\end{proof}
Note that the upper bound in lemma \ref{lem:ub of phi} is conservative because we regard $P_k $ as a zero matrix.
\begin{remark}
The upper bounds $P_\mathrm{MD}$ \eqref{eq:ub of P_plus} and $\Phi_\mathrm{MD}$ \eqref{eq:ub of phi_k} have advantages compared to the upper bounds $P_\mathrm{LJ}$ and $\Phi_\mathrm{LJ}$.
As these bounds are upper bounds to $P_k$ and $\Phi_k$ for all $k \geq 0$, we can use them for infinite time optimal control problem.
On the other hands, there exist disadvantages of the upper bounds $P_\mathrm{MD}$ \eqref{eq:ub of P_plus} and $\Phi_\mathrm{MD}$ \eqref{eq:ub of phi_k}.
As the assumptions \ref{assum: initial condition}-\ref{assum: system matrix} must hold for these upper bounds $P_\mathrm{MD}$ and $\Phi_\mathrm{MD}$, these bounds are valid for limited cases. 
Moreover, by derivation, $\Phi_\mathrm{MD}$ \eqref{eq:ub of phi_k} is conservative.
\end{remark}

\subsection{Uniformly Bounded Confidence Sets}
\label{sub:Uniformly Bounded Confidece Set}
Recall \textit{Definition}~\ref{def:uniform confidence set}. 
In the following, we show a way to design a uniformly bounded confidence set. 
\begin{lemma} \label{lem: uniformly bounded confidence set}
Let \(r_k  \in \mathbb{R}^{n_r} \sim \mathcal{N}(0, \Sigma_k )\), and \(\Sigma\)  a uniform upper bound of \(\Sigma_k \).
Define
\begin{equation} \label{eq:uniform confidence polytope}
\begin{split}
    & \mathcal{E}_{1-p}^{r} = \{r_k  : H_r r_k  \leq \bar{h}_{r}, k = 0,...,T-1\}, \\
    & [\bar{h}_{r}]_m = \mathrm{cdf}^{-1}(1-p_{m}) \sqrt{[H_r]_m \Sigma [H_r]_m^\top},\\
    & {\scriptstyle\sum}_{m=1}^{n_r} p_{m} = p, \, p_{m} \in (0,1), \\
    & m = 1,...,n_r, 
\end{split}
\end{equation}
where $H_r$ is a design parameter and \(\mathrm{cdf}^{-1} (\cdot)\) is the inverse of the cumulative Gaussian distribution function.
Then, \(\mathcal{E}_{1-p}^{r}\) is a uniformly bounded confidence set of probability level \(1-p\) for the random variable \(r_k \).
\end{lemma}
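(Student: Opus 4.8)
The plan is to bound the probability of the complementary event $\{r_k \notin \mathcal{E}_{1-p}^r\}$ by $p$, uniformly in $k$, via a union bound over the halfspace constraints defining the polytope. Since $\mathcal{E}_{1-p}^r = \{r : [H_r]_m r \leq [\bar{h}_r]_m,\, m=1,\dots,n_r\}$, the event $r_k \notin \mathcal{E}_{1-p}^r$ occurs exactly when at least one row constraint is violated, i.e. $\{r_k \notin \mathcal{E}_{1-p}^r\} = \bigcup_{m=1}^{n_r}\{[H_r]_m r_k > [\bar{h}_r]_m\}$. Applying Boole's inequality gives $\mathbb{P}(r_k \notin \mathcal{E}_{1-p}^r) \leq \sum_{m=1}^{n_r}\mathbb{P}([H_r]_m r_k > [\bar{h}_r]_m)$, so it suffices to show each summand is at most $p_m$; the budget constraint $\sum_m p_m = p$ then closes the argument.

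Second, I would bound each per-row violation probability. Since $r_k \sim \mathcal{N}(0,\Sigma_k)$, the scalar $[H_r]_m r_k$ is zero-mean Gaussian with variance $\sigma_{m,k}^2 := [H_r]_m \Sigma_k [H_r]_m^\top$. Writing $\sigma_m^2 := [H_r]_m \Sigma [H_r]_m^\top$ for the corresponding quantity built from the uniform upper bound, the definition in \eqref{eq:uniform confidence polytope} reads $[\bar{h}_r]_m = \mathrm{cdf}^{-1}(1-p_m)\,\sigma_m$. Standardizing the Gaussian yields $\mathbb{P}([H_r]_m r_k > [\bar{h}_r]_m) = 1 - \mathrm{cdf}\big([\bar{h}_r]_m/\sigma_{m,k}\big) = 1 - \mathrm{cdf}\big(\mathrm{cdf}^{-1}(1-p_m)\,\sigma_m/\sigma_{m,k}\big)$.

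Third, I would invoke the uniform covariance bound $\Sigma_k \preccurlyeq \Sigma$ from the hypothesis: for the fixed vector $v = [H_r]_m^\top$ this gives $\sigma_{m,k}^2 = v^\top \Sigma_k v \leq v^\top \Sigma v = \sigma_m^2$, hence the ratio $\sigma_m/\sigma_{m,k} \geq 1$. Provided $\mathrm{cdf}^{-1}(1-p_m) \geq 0$ and using that $\mathrm{cdf}$ is monotonically increasing, we obtain $\mathrm{cdf}\big(\mathrm{cdf}^{-1}(1-p_m)\,\sigma_m/\sigma_{m,k}\big) \geq \mathrm{cdf}\big(\mathrm{cdf}^{-1}(1-p_m)\big) = 1-p_m$, so each per-row violation probability is at most $p_m$. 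Summing over $m$ and using $\sum_{m=1}^{n_r} p_m = p$ gives $\mathbb{P}(r_k \notin \mathcal{E}_{1-p}^r) \leq p$, i.e. $\mathbb{P}(r_k \in \mathcal{E}_{1-p}^r) \geq 1-p$. Crucially this holds for \emph{every} $k$, because $\Sigma$ dominates every $\Sigma_k$ and the set itself does not depend on $k$, which is precisely the uniform requirement of Definition~\ref{def:uniform confidence set}.

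The main obstacle is keeping the inequality directions consistent: the entire construction rests on the observation that inflating the variance from $\sigma_{m,k}^2$ to $\sigma_m^2$ pushes the threshold further into the tail, which can only \emph{decrease} the violation probability, so a single time-invariant set built from $\Sigma$ suffices simultaneously for all $k$. I would also flag the implicit requirement $\mathrm{cdf}^{-1}(1-p_m) \geq 0$ (equivalently $p_m \leq 1/2$); without nonnegativity of the quantile, multiplying by $\sigma_m/\sigma_{m,k} \geq 1$ would move the argument the wrong way and the monotonicity step would reverse. In the intended small-violation-probability regime this condition is automatic, so it does not restrict the practical use of the lemma.
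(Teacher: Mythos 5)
Your proof is correct and takes essentially the same route as the paper's: Boole's inequality splits the joint chance constraint into per-row Gaussian tail bounds, and the uniform bound \(\Sigma_k \preccurlyeq \Sigma\) then makes one time-invariant set valid for every \(k\) (the paper packages this last step as the set inclusion \(\mathcal{E}_{1-p}^{r_k} \subseteq \mathcal{E}_{1-p}^{r}\) and cites Proposition~2 of an external reference for the scalar quantile step that you instead verify directly by standardization). Your explicit flag that the variance-inflation argument needs \(\mathrm{cdf}^{-1}(1-p_m) \geq 0\), i.e.\ \(p_m \leq 1/2\), is a valid point that the paper's proof leaves implicit.
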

\begin{proof}
See Appendix. 
\end{proof}
Based on \textit{Lemma} \ref{lem: uniformly bounded confidence set}, we construct the following two \textit{Uniformly Bounded Confidence Sets} using the uniform upper bounds \(P\) and \(\Phi\):
\begin{itemize}
    \item \(\mathcal{E}_{1-p_x}^e\) : the uniformly bounded confidence set of probability level \(1-p_x\) for the random variable \(e_k\).
    \item \(\mathcal{E}_{1-p_f}^n\) : the uniformly bounded confidence set of probability level \(1-p_f\) for the random variable \(n_k\).
\end{itemize}
In this manuscript, each set is designed to be a compact polytope and contain the origin in its interior. 
The design parameter $H_r$ of each set is constructed using the right eigenvectors of the corresponding matrix describing the upper bound.
For example, in the case of \textit{Lemma} \ref{lem: uniformly bounded confidence set}, we use an eigenvector decomposition of the upper bound matrix $\Sigma$ as $\Sigma V = V D$, where $V$ is a matrix whose columns are right eigenvectors of $\Sigma$. Then, we devise $H_r=\begin{bmatrix} V & -V \end{bmatrix}^\top$ and construct \textit{Uniformly Bounded Confidence Set} as in \textit{Lemma} \ref{lem: uniformly bounded confidence set}. 
In the 2D case, this \textit{Uniformly Bounded Confidence Set} is the tight bounding rectangle of the ellipse $x^\top {\Sigma}^{-1} x \leq \mathrm{cdf}^{-1}(1-p_{m})$ which aligns with the ellipse's principal axes.

\section{Constraint tightening}
In this section, we describe a constraint tightening strategy based on the \emph{shrinking tube} structure \cite{chisci2001systems} using \textit{Uniformly Bounded Confidence Sets} $\mathcal{E}_{1-p_x}^e$ and $\mathcal{E}_{1-p_f}^n$. 
We prove that this strategy allows us to guarantee a quantifiable probability of success, which is the second contribution of this paper.
The underlying idea is to guarantee recursive feasibility of the MPC problem in \eqref{eq:MPC} with a user specified bound on its probability.
We refer to this property of the MPC problem \eqref{eq:MPC} as \emph{probabilistic recursive feasibility}.

\subsection{State constraint tightening}
Tightening the state constraints consists of two sequential steps.
First, we tighten the given state set \(\mathcal{X}\) to satisfy the chance constraint.
We can derive \(\hat{\mathcal{X}}\) using the uniformly bounded confidence set \(\mathcal{E}_{1-p_x}^e\):
\begin{equation} \label{eq:state set bar explicit}
\begin{split}
    & \hat{\mathcal{X}} = \mathcal{X} \ominus \mathcal{E}^{e}_{1-p_x}.
\end{split}
\end{equation}
This make sure that if the estimation error $e_k$ is in the set $\mathcal{E}^{e}_{1-p_x}$ and $\hat{x}_k  \in \hat{\mathcal{X}}$, then the true state $x_k$ satisfies the state constraint $\mathcal{X}$.
By construction of $\mathcal{E}^{e}_{1-p_x}$ in \ref{sub:Uniformly Bounded Confidece Set}, the set $\hat{x}_k  \in \hat{\mathcal{X}}$ implies the satisfaction of the state chance constraint in \eqref{eq:constraints}, i.e., $\hat{x}_k  \in \hat{\mathcal{X}} \Rightarrow \mathbb{P}(x_k \in \mathcal{X}| \hat{x}_k ) \geq 1-p_x$.
\begin{remark}
The nominal state $\bar{x}_{0|k}$ is initialized with the estimated state $\hat{x}_k \in \hat{\mathcal{X}}$ in \eqref{eq:MPC}.
As mentioned before, due to unboundedness of noise and disturbance supports and boundedness of inputs, the MPC problem \eqref{eq:MPC} might become infeasible.
Thus, we aim to regulate the rate of loss of recursive feasibility below a user-specified bound $p_f$ by further tightening $\hat{\mathcal{X}}$ using $\mathcal{E}_{1-p_f}^n$.
\end{remark}

Second, we tighten \( \hat{\mathcal{X}}\) using the shrinking tube approach in  \cite{chisci2001systems} and impose the tightened constraints on the nominal states to ensure recursive feasibility of the MPC problem with probability no smaller than \(1-p_f\), i.e., regulate the loss of recursive feasibility rate to a value no larger than $p_f$. The following equations are the direct result of the shrinking tube tightening applied to (\ref{eq:control error dynamics}) with the uncertain set $\mathcal{E}_{1-p_f}^n$:
\begin{equation} \label{defition of the state set}
\begin{split}
    & \Bar{\mathcal{X}}_{\mathrm{RF}, 0} =  \hat{\mathcal{X}}, \\
    & \Bar{\mathcal{X}}_{\mathrm{RF}, i} =  \hat{\mathcal{X}} \ominus \bigoplus_{q=0}^{i-1} {A^q_\mathrm{cl}} \mathcal{E}_{1-p_f}^n, \\
    & i=1,2,...,N-1. 
\end{split}
\end{equation}

\subsection{Control input constraint tightening}
Similar to \eqref{defition of the state set}, we tighten the control input constraints using the shrinking tube approch in \cite{chisci2001systems} to ensure recursive feasibility  with probability no smaller than \(1-p_f\). We obtain:
\begin{equation} \label{defition of the control set}
\begin{split}
    & \bar{\mathcal{U}}_{\mathrm{RF}, 0} = \mathcal{U}, \\
    & \bar{\mathcal{U}}_{\mathrm{RF}, i} = \mathcal{U} \ominus \bigoplus_{q=0}^{i-1} K{A^q_\mathrm{cl}}  \mathcal{E}_{1-p_f}^n, \\
    & i=1,2,...,N-1. 
\end{split}
\end{equation}

\subsection{Terminal set construction}
The terminal set is constructed into two sequential steps.
First, we construct a robust positive invariant set \(\hat{\mathcal{X}}_{f}\) for the uncertain system \eqref{eq:estimated state dynamics} under the linear control  policy $u_k = K \hat{x}_k$.
Robust positive invariant sets are defined as follows \cite{borrelli2017predictive}.
\begin{definition} \label{def:RPIS}
(Robust Positive Invariant Set) A set $\mathcal{O} \subseteq \mathcal{X}$ is said to be a robust positive invariant set for the autonomous system $x_{k+1}=Ax_{k} + w_k$, where $w_k \in \mathcal{W}$, if $x_k \in \mathcal{O} \Rightarrow x_{k+1} \in \mathcal{O}, \, \forall w_k \in \mathcal{W}$.
\end{definition}

Second, to regulate the loss of feasibility rate, we further tighten \(\hat{\mathcal{X}}_{f}\) as follows:
\begin{equation} \label{defition of the t set}
\begin{split}
    & \bar{\mathcal{X}}_{f,\mathrm{RF}} = \hat{\mathcal{X}}_{f} \ominus \bigoplus_{q=0}^{N-1} {A^q_\mathrm{cl}} \mathcal{E}_{1-p_f}^n. 
\end{split}
\end{equation}

\section{Controller Properties}

\begin{theorem} \label{thm:Probabilistic recursive feasibility}
(Probabilistic recursive feasibility)
Consider the closed-loop system \eqref{eq:system} when the control input \eqref{eq:MPC policy} is obtained by solving the MPC \eqref{eq:MPC} where the nominal state $\bar{x}_{0|k}$ is initialized with the state estimate $\hat{x}_k$ obtained from the Kalman filter \eqref{eq:kalman filter} and the constraints are computed as in \eqref{defition of the state set}, \eqref{defition of the control set}, and \eqref{defition of the t set}.
For any $k \geq 0 $, if the MPC \eqref{eq:MPC} is feasible at time step $k$, then with probability no smaller than \(1-p_f\), the MPC \eqref{eq:MPC} is feasible at time step $k+1$.
\end{theorem}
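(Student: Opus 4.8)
The plan is to adapt the standard shrinking-tube recursive-feasibility argument of \cite{chisci2001systems} and to carry it out conditionally on the single random event that the estimator disturbance $n_k$ falls in its confidence set $\mathcal{E}_{1-p_f}^n$. I would let $\{c^\star_{i|k}\}_{i=0}^{N-1}$ be the optimal solution of \eqref{eq:MPC} at time $k$, with nominal trajectory $\{\bar{x}^\star_{i|k}\}_{i=0}^{N}$ satisfying $\bar{x}^\star_{i|k}\in\bar{\mathcal{X}}_{\mathrm{RF},i}$, $c^\star_{i|k}\in\bar{\mathcal{U}}_{\mathrm{RF},i}$, and $\bar{x}^\star_{N|k}\in\bar{\mathcal{X}}_{f,\mathrm{RF}}$. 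The key observation is that, since $\bar{x}^\star_{0|k}=\hat{x}_k$ and the applied input is $c^\star_{0|k}$, the estimator dynamics \eqref{eq:estimated state dynamics} give $\hat{x}_{k+1}-\bar{x}^\star_{1|k}=n_k$: the mismatch between the realized estimate and the one-step nominal prediction is exactly the disturbance that drives the tightening. I would then propose the shifted-and-corrected candidate $\tilde{c}_{i|k+1}=c^\star_{i+1|k}+KA_\mathrm{cl}^{\,i}n_k$ for $i=0,\dots,N-2$ and $\tilde{c}_{N-1|k+1}=K\tilde{x}_{N-1|k+1}$, and verify by induction using $A_\mathrm{cl}=A+BK$ that the resulting nominal trajectory is $\tilde{x}_{i|k+1}=\bar{x}^\star_{i+1|k}+A_\mathrm{cl}^{\,i}n_k$ for $i=0,\dots,N-1$, with $\tilde{x}_{0|k+1}=\hat{x}_{k+1}$ matching the MPC initialization, and $\tilde{x}_{N|k+1}=A_\mathrm{cl}\bar{x}^\star_{N|k}+A_\mathrm{cl}^{\,N}n_k$.

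The feasibility check proceeds on the event $\{n_k\in\mathcal{E}_{1-p_f}^n\}$. For stages $i=0,\dots,N-1$ I would use the telescoping structure of the tightened sets implied by \eqref{defition of the state set}, namely $\bar{\mathcal{X}}_{\mathrm{RF},i+1}=\bar{\mathcal{X}}_{\mathrm{RF},i}\ominus A_\mathrm{cl}^{\,i}\mathcal{E}_{1-p_f}^n$ (and the analogous identity for the input sets from \eqref{defition of the control set}). Since $\bar{x}^\star_{i+1|k}\in\bar{\mathcal{X}}_{\mathrm{RF},i+1}$ and $n_k\in\mathcal{E}_{1-p_f}^n$ gives $A_\mathrm{cl}^{\,i}n_k\in A_\mathrm{cl}^{\,i}\mathcal{E}_{1-p_f}^n$, the definition of the Pontryagin difference yields $\tilde{x}_{i|k+1}\in\bar{\mathcal{X}}_{\mathrm{RF},i}$, and symmetrically $\tilde{c}_{i|k+1}\in\bar{\mathcal{U}}_{\mathrm{RF},i}$ for $i=0,\dots,N-2$.

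The step I expect to be the main obstacle is the terminal constraint $\tilde{x}_{N|k+1}\in\bar{\mathcal{X}}_{f,\mathrm{RF}}$ (together with the terminal input constraint), where shifting no longer suffices and the robust positive invariance of $\hat{\mathcal{X}}_f$ under $u=K\hat{x}$ with disturbance set $\mathcal{E}_{1-p_f}^n$ must be invoked. Testing membership of $\tilde{x}_{N|k+1}=A_\mathrm{cl}\bar{x}^\star_{N|k}+A_\mathrm{cl}^{\,N}n_k$ in $\bar{\mathcal{X}}_{f,\mathrm{RF}}=\hat{\mathcal{X}}_f\ominus\bigoplus_{q=0}^{N-1}A_\mathrm{cl}^{\,q}\mathcal{E}_{1-p_f}^n$, I would show that for arbitrary $\delta_0,\dots,\delta_{N-1}\in\mathcal{E}_{1-p_f}^n$ the point $A_\mathrm{cl}\bar{x}^\star_{N|k}+A_\mathrm{cl}^{\,N}n_k+\sum_{q=0}^{N-1}A_\mathrm{cl}^{\,q}\delta_q$ can be regrouped as $A_\mathrm{cl}\big(\bar{x}^\star_{N|k}+\sum_{q=0}^{N-1}A_\mathrm{cl}^{\,q}\delta'_q\big)+\delta_0$, where the inner argument lies in $\hat{\mathcal{X}}_f$ because $\bar{x}^\star_{N|k}\in\bar{\mathcal{X}}_{f,\mathrm{RF}}$ and every relabeled $\delta'_q$ (a permutation of $\delta_1,\dots,\delta_{N-1},n_k$) lies in $\mathcal{E}_{1-p_f}^n$; robust invariance then closes the inclusion. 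The terminal input bound $\tilde{c}_{N-1|k+1}=K\tilde{x}_{N-1|k+1}\in\bar{\mathcal{U}}_{\mathrm{RF},N-1}$ should follow from the input-admissibility property built into the construction of $\hat{\mathcal{X}}_f$.

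Having established that $\{n_k\in\mathcal{E}_{1-p_f}^n\}$ implies feasibility of \eqref{eq:MPC} at $k+1$, I would conclude via Definition~\ref{def:uniform confidence set} and Lemma~\ref{lem: uniformly bounded confidence set}: because $\mathcal{E}_{1-p_f}^n$ is a uniformly bounded confidence set of level $1-p_f$ for $n_k$ (using $\Phi_k\preccurlyeq\Phi$ from Section~\ref{sub:Uniform upper bound}), we have $\mathbb{P}(n_k\in\mathcal{E}_{1-p_f}^n)\geq 1-p_f$, and therefore \eqref{eq:MPC} is feasible at $k+1$ with probability at least $1-p_f$.
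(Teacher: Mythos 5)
Your proposal is correct and essentially reproduces the paper's own proof: the same conditioning on the event $\{n_k\in\mathcal{E}_{1-p_f}^n\}$, the same shifted candidate $\tilde{c}_{i|k+1}=c^\star_{i+1|k}+KA_\mathrm{cl}^{i}\tilde{n}_k$ with terminal input $K(\bar{x}_{N|k}+A_\mathrm{cl}^{N-1}\tilde{n}_k)$, the same perturbed trajectory $\bar{x}_{i|k+1}=\bar{x}_{i+1|k}+A_\mathrm{cl}^{i}\tilde{n}_k$, and the same regrouping-plus-robust-invariance argument for $\bar{x}_{N|k+1}\in\bar{\mathcal{X}}_{f,\mathrm{RF}}$ as in \eqref{eq:terminal set proof}. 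The only details to make explicit are that your telescoping step at stage $i=N-1$ cannot invoke a set $\bar{\mathcal{X}}_{\mathrm{RF},N}$ (the MPC only imposes $\bar{x}_{N|k}\in\bar{\mathcal{X}}_{f,\mathrm{RF}}$), so you need the containment $\hat{\mathcal{X}}_{f}\subseteq\hat{\mathcal{X}}$ to get $\bar{x}_{N-1|k+1}\in\bar{\mathcal{X}}_{\mathrm{RF},N-1}$ as in \eqref{eq:last state proof}, and that the terminal input bound you gesture at is established by the same Pontryagin-difference extraction, using the input admissibility $K\hat{\mathcal{X}}_{f}\subseteq\mathcal{U}$ built into the invariant set, as in \eqref{eq:last candidate solution proof}.
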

\begin{proof}
Let the optimal auxiliary input sequence from time step $k$ as \(c^\star_{0:N-1|k}\) and the corresponding predicted nominal state as $\bar{x}_{0:N|k}$.
With probability no smaller than \(1-p_f\), \(n_k  \in \mathcal{E}_{1-p_f}^n\) by definition of the \textit{Uniformly Bounded Confidence Set} in \textit{Definition} \ref{def:uniform confidence set}.
Thus, for proving \textit{Theorem} \ref{thm:Probabilistic recursive feasibility}, it is sufficient to prove recursive feasibility of the MPC \eqref{eq:MPC} when the additive disturbance at time step $k$ is in $\mathcal{E}_{1-p_f}^n$, i.e., \(n_k  \in \mathcal{E}_{1-p_f}^n\).

From $\bar{x}_{0|k}  = \hat{x}_k $ and \eqref{eq:control error dynamics}, the estimated state $\hat{x}_{k+1} $ is re-written as:
\begin{equation} \label{eq:next state estimated}
\begin{split}
    & \hat{x}_{k+1}  = \bar{x}_{1|k} + \Tilde{n}_k, 
\end{split}
\end{equation}
where \(\Tilde{n}_k\in \mathcal{E}_{1-p_f}^n\) is the realized disturbance in \eqref{eq:estimated state dynamics} at time step \(k\). We consider a candidate solution to the MPC \eqref{eq:MPC} at time step \(k+1\) based on the optimal solution to the MPC \eqref{eq:MPC} at time step \(k\) as follows:
\begin{equation} \label{eq:candidate solution j=1}
\begin{split}
    &  \Tilde{c}_{i|k+1} =  \begin{cases} 
     c^\star_{i+1|k} + K A_\mathrm{cl}^{i} \Tilde{n}_k, & i = 0,...,N-2, \\
    K (\bar{x}_{N|k} + A_\mathrm{cl}^{i} \Tilde{n}_k), & i=N-1. 
   \end{cases} 
\end{split}
\end{equation}
Under this candidate solution and \eqref{eq:next state estimated}, the states $\bar{x}_{0:N|k+1}$ at time step $k+1$ are derived as:
\begin{equation} \label{eq: predicted states}
\begin{split}
    & \bar{x}_{i|k+1} =  \begin{cases} 
    \bar{x}_{i+1|k} + A_\mathrm{cl}^{i} \Tilde{n}_k, & i = 0,...,N-1,\\
     A_\mathrm{cl} \bar{x}_{N-1|k+1} , & i=N.
   \end{cases} 
\end{split}
\end{equation}
Following \cite[Lemma~7]{chisci2001systems}, the candidate solution in \eqref{eq:candidate solution j=1} and the resulting state in \eqref{eq: predicted states} satisfy the control constraints and the state constraints, respectively, i.e., $\Tilde{c}_{i|k+1}\in \bar{\mathcal{U}}_{\mathrm{RF}, i}, \, \bar{x}_{i|k+1} \in \Bar{\mathcal{X}}_{\mathrm{RF}, i}\, \forall i\in\{0,1,...,N-2\}$.


Now we show the candidate solution when $i = N-1$ satisfies the control input constraint in \eqref{defition of the control set}.
\begin{equation} \label{eq:last candidate solution proof}
\begin{split}
    & \quad \,\,\, \bar{x}_{N|k} \in \bar{\mathcal{X}}_{f,\mathrm{RF}}\\
    & \Rightarrow \bar{x}_{N|k} + {\scriptstyle\sum}_{q=0}^{N-1} A_\mathrm{cl}^q n_q \in \hat{\mathcal{X}}_{f} , \, \forall n_q \in \mathcal{E}_{1-p_f}^n (\because \eqref{defition of the t set})\\
    & \Rightarrow K(\bar{x}_{N|k} + {\scriptstyle\sum}_{q=0}^{N-1} A_\mathrm{cl}^q n_q) \in \mathcal{U}, \, \forall n_q \in \mathcal{E}_{1-p_f}^n \\
    & \Rightarrow \underbrace{K(\bar{x}_{N|k}  + A_\mathrm{cl}^{N-1} \Tilde{n}_k )}_{\Tilde{c}_{N-1|k+1} \, (\because \eqref{eq:candidate solution j=1})} \in \underbrace{\mathcal{U} \ominus \bigoplus_{q=0}^{N-2} K{A^q_\mathrm{cl}} \mathcal{E}_{1-p_f}^n}_{\bar{\mathcal{U}}_{\mathrm{RF}, N-1} (\because \eqref{defition of the control set})}. \\
    & \, \therefore \Tilde{c}_{N-1|k+1} \in \bar{\mathcal{U}}_{\mathrm{RF}, N-1}. 
\end{split}
\end{equation}
The third line of \eqref{eq:last candidate solution proof} results from the fact that $\hat{\mathcal{X}}_{f}$ is a robust positive invariant set.


Now we prove that satisfying the terminal constraints at the time step \(k\), $\bar{x}_{N|k}  \in \bar{\mathcal{X}}_{f,\mathrm{RF}}$, implies the satisfaction of the state constraints at the time step \(k+1\), $\bar{x}_{N-1|k+1}  \in \Bar{\mathcal{X}}_{\mathrm{RF}, N-1}$, as follows:
\begin{equation} \label{eq:last state proof}
\begin{split}
    & \quad \,\,\, \bar{x}_{N|k}  \in \bar{\mathcal{X}}_{f,\mathrm{RF}} \\
    & \Rightarrow \bar{x}_{N|k}  + {\scriptstyle\sum}_{q=0}^{N-1} A_\mathrm{cl}^q n_q \in \hat{\mathcal{X}}_{f} , \, \forall n_q \in \mathcal{E}_{1-p_f}^n (\because \eqref{defition of the t set})\\
    & \Rightarrow \bar{x}_{N|k}  + A_\mathrm{cl}^{N-1} \Tilde{n}_k  \in \hat{\mathcal{X}}_{f} \ominus \bigoplus_{q=0}^{N-2} {A^q_\mathrm{cl}} \mathcal{E}_{1-p_f}^n \\
    & \Rightarrow \underbrace{\bar{x}_{N|k}  + A_\mathrm{cl}^{N-1} \Tilde{n}_k}_{\bar{x}_{N-1|k+1} (\because \eqref{eq: predicted states})} \in \underbrace{\hat{\mathcal{X}} \ominus \bigoplus_{q=0}^{N-2} {A^q_\mathrm{cl}} \mathcal{E}_{1-p_f}^n}_{ \Bar{\mathcal{X}}_{\mathrm{RF}, N-1} (\because \eqref{defition of the state set})}. \\
    & \therefore \bar{x}_{N-1|k+1}  \in \Bar{\mathcal{X}}_{\mathrm{RF}, N-1}.  
\end{split}
\end{equation}
The fourth line of \eqref{eq:last state proof} results from the fact that $\hat{\mathcal{X}}_{f} \subseteq \hat{\mathcal{X}}$.
Therefore, the states at time step $k+1$ satisfies the state constraints, i.e., $\bar{x}_{i|k+1}  \in \Bar{\mathcal{X}}_{\mathrm{RF}, i}, \, i=0,1,...,N-1$. 

Finally, we show that $\bar{x}_{N|k+1}  \in \bar{\mathcal{X}}_{f,\mathrm{RF}} $.
Note that since $\bar{\mathcal{X}}_{f,\mathrm{RF}} $ is not a robust invariant set, this claim is not trivial.
We prove the claim as follow:
\begin{equation} \label{eq:terminal set proof}
\begin{split}
    & \quad \,\,\, \bar{x}_{N|k}  \in \bar{\mathcal{X}}_{f,\mathrm{RF}} \\
    & \Rightarrow \bar{x}_{N|k}  + {\scriptstyle\sum}_{q=0}^{N-1} A_\mathrm{cl}^q n_q \in \hat{\mathcal{X}}_{f} , \, \forall n_q \in \mathcal{E}_{1-p_f}^n (\because \eqref{defition of the t set})\\
    & \Rightarrow A_\mathrm{cl} \bar{x}_{N|k}  + {\scriptstyle\sum}_{q=0}^{N} A_\mathrm{cl}^q n_q  \in \hat{\mathcal{X}}_{f}, \, \forall n_q \in \mathcal{E}_{1-p_f}^n\\
    & \Rightarrow \underbrace{A_\mathrm{cl} (\bar{x}_{N|k}  + A_\mathrm{cl}^{N-1} \Tilde{n}_k )}_{\bar{x}_{N|k+1} \, (\because \eqref{eq: predicted states})} \in  \underbrace{\hat{\mathcal{X}}_f \ominus \bigoplus_{q=0}^{N-1} {A_\mathrm{cl}}^{q} \mathcal{E}_{1-p_f}^n}_{\bar{\mathcal{X}}_{f,\mathrm{RF}} (\because \eqref{defition of the t set})}. \\
     & \therefore \bar{x}_{N|k+1}  \in \bar{\mathcal{X}}_{f,\mathrm{RF}}.  
\end{split}
\end{equation}
The third line of \eqref{eq:terminal set proof} results from the fact that $\hat{\mathcal{X}}_{f}$ is a robust positive invariant set.
Thus, the candidate solution \eqref{eq:candidate solution j=1} is the feasible solution of the MPC \eqref{eq:MPC} at time step $k+1$.
\end{proof}

\begin{corollary} \label{cor:Constraint satisfaction}
(Chance Constraint Satisfaction)
Consider the closed-loop system \eqref{eq:system} when the control input \eqref{eq:MPC policy} is obtained by solving the MPC \eqref{eq:MPC} where the nominal state $\bar{x}_{0|k}$ is initialized with the state estimate $\hat{x}_k$ obtained from the Kalman filter \eqref{eq:kalman filter} and the constraints are computed as \eqref{defition of the state set}, \eqref{defition of the control set}, and \eqref{defition of the t set}.
Let the MPC in \eqref{eq:MPC} be feasible at time step $k$. 
Then, the constraints \eqref{eq:constraints} are satisfied, i.e.,
\begin{equation} \label{eq:constraints recall}
\begin{split}
    & \qquad \, \, \mathbb{P}(x_{k} \in \mathcal{X}) \geq 1-p_x,\\
    & \qquad \, \, u_k=\pi_k(\hat{x}_k ) \in \mathcal{U}.
\end{split}
\end{equation}
\end{corollary}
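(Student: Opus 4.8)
The plan is to dispatch the two constraints in \eqref{eq:constraints recall} separately: the hard input constraint follows deterministically from feasibility, whereas the state chance constraint requires translating a set inclusion into a probabilistic bound via the confidence-set property of $\mathcal{E}^e_{1-p_x}$. Throughout I would only use feasibility of \eqref{eq:MPC} at step $k$, the initialization $\bar{x}_{0|k}=\hat{x}_k$, and the constraint definitions \eqref{eq:state set bar explicit}, \eqref{defition of the state set}, \eqref{defition of the control set}.

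For the input constraint, I would observe that the first input constraint appearing in \eqref{eq:MPC} is $c_{0|k}\in\bar{\mathcal{U}}_{\mathrm{RF},0}$, and that by \eqref{defition of the control set} the $i=0$ tightening is empty, so $\bar{\mathcal{U}}_{\mathrm{RF},0}=\mathcal{U}$. Hence any feasible (in particular the optimal) first auxiliary input satisfies $c^\star_{0|k}\in\mathcal{U}$. Since the applied control law \eqref{eq:MPC policy} sets $u_k=\pi_k(\hat{x}_k)=c^\star_{0|k}$, the inclusion $u_k\in\mathcal{U}$ holds whenever \eqref{eq:MPC} is feasible, with no probabilistic statement needed.

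For the state chance constraint, feasibility together with $\bar{x}_{0|k}=\hat{x}_k$ and the first state constraint $\bar{x}_{0|k}\in\bar{\mathcal{X}}_{\mathrm{RF},0}=\hat{\mathcal{X}}$ (from \eqref{defition of the state set}) yields $\hat{x}_k\in\hat{\mathcal{X}}$. By \eqref{eq:state set bar explicit}, $\hat{\mathcal{X}}=\mathcal{X}\ominus\mathcal{E}^e_{1-p_x}$, so the definition of the Pontryagin difference gives $\hat{x}_k+e\in\mathcal{X}$ for every $e\in\mathcal{E}^e_{1-p_x}$. Recalling $x_k=\hat{x}_k+e_k$ from \eqref{eq:estimation error dynamics}, this produces the key event inclusion $\{e_k\in\mathcal{E}^e_{1-p_x}\}\subseteq\{x_k\in\mathcal{X}\}$. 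Since $\mathcal{E}^e_{1-p_x}$ is a \textit{Uniformly Bounded Confidence Set} of level $1-p_x$ for $e_k$, Definition~\ref{def:uniform confidence set} (via Lemma~\ref{lem: uniformly bounded confidence set}) gives $\mathbb{P}(e_k\in\mathcal{E}^e_{1-p_x})\geq1-p_x$, and monotonicity of probability over the inclusion then yields $\mathbb{P}(x_k\in\mathcal{X})\geq1-p_x$.

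The main subtlety I expect is the conditioning. The confidence bound on $e_k$ is derived from the conditional law $p(e_k\mid y_{0:k})\sim\mathcal{N}(0,P_k)$ in \eqref{eq:cond dist plus of e}, while the estimate $\hat{x}_k$ — and hence the feasibility event $\hat{x}_k\in\hat{\mathcal{X}}$ — is itself a deterministic function of $y_{0:k}$. I would therefore argue the bound conditionally, establishing $\mathbb{P}(x_k\in\mathcal{X}\mid y_{0:k})\geq\mathbb{P}(e_k\in\mathcal{E}^e_{1-p_x}\mid y_{0:k})\geq1-p_x$ on the measurement realizations for which the MPC is feasible, and then recover the stated bound by the tower property. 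The delicate point is that $P_k$ is data-dependent, but the uniform upper bound $P\succcurlyeq P_k$ from Lemma~\ref{lem:ub of Pplus} is exactly what keeps the $1-p_x$ level valid uniformly over $y_{0:k}$ (this is the whole purpose of the \emph{uniform} construction in Section~\ref{sub:Uniformly Bounded Confidece Set}), making the conditional bound uniform and the final averaging step immediate.
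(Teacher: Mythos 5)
Your proof is correct and follows essentially the same route as the paper's: feasibility of \eqref{eq:MPC} gives $\hat{x}_k=\bar{x}_{0|k}\in\bar{\mathcal{X}}_{\mathrm{RF},0}=\hat{\mathcal{X}}$, the Pontryagin-difference structure of $\hat{\mathcal{X}}$ in \eqref{eq:state set bar explicit} combined with $\mathbb{P}(e_k\in\mathcal{E}^e_{1-p_x})\geq 1-p_x$ yields the state chance constraint, and $u_k=c^\star_{0|k}\in\bar{\mathcal{U}}_{\mathrm{RF},0}=\mathcal{U}$ settles the hard input constraint deterministically. Your closing paragraph on conditioning on $y_{0:k}$ and recovering the unconditional bound by the tower property is a sound rigor refinement of a step the paper leaves implicit, but it does not alter the structure of the argument.
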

\begin{proof}
We first prove $\mathbb{P}(x_{k} \in \mathcal{X}) \geq 1-p_x$ if the MPC in \eqref{eq:MPC} is feasible given the initial condition $\hat{x}_k $ at time step $k$. As $\bar{x}_{0|k} = \hat{x}_k $, the following holds, 
\begin{equation} \label{eq:estimated state in xbar}
\begin{split}
    & \bar{x}_{0|k} \in \bar{\mathcal{X}}_{\mathrm{RF},0} \iff \hat{x}_{k}  \in \bar{\mathcal{X}}_{\mathrm{RF},0} =  \hat{\mathcal{X}}. 
\end{split}
\end{equation}
By \eqref{eq:state set bar explicit}, $ \hat{\mathcal{X}} \oplus \mathcal{E}^{e}_{1-p_x} \subseteq \mathcal{X}$.
Moreover, as per the definition of \(\mathcal{E}^{e}_{1-p_x}\), \(\mathbb{P}(x_{k} - \hat{x}_{k} \in \mathcal{E}^{e}_{1-p_x}) \geq 1-p_x, \, \forall k \geq 0\).
As \(\hat{x}_{k} \in  \hat{\mathcal{X}}\) in \eqref{eq:estimated state in xbar}, \(\mathbb{P}(x_{k} \in \mathcal{X}) \geq 1-p_x\).

Second, we prove $\pi_k(\hat{x}_{i|k}) \in \mathcal{U}$.
As the MPC in \eqref{eq:MPC} is feasible, $c^\star_{0|k} \in \bar{\mathcal{U}}_{\mathrm{RF},0}=\mathcal{U}$.
In closed-loop, we will apply \(\pi_k(\hat{x}_k ) = c^\star_{0|k}\) at the time step \(k\).
Thus, $u_k=\pi_k(\hat{x}_k ) \in \mathcal{U}$ is satisfied. 
\end{proof}

\section{Simulation Results}
In this section, we find approximate solutions to the following optimal control problem in receding horizon fashion with horizon $N=5$.
\begin{equation} \label{eq:ftocp example}
\begin{split}
    & V (\mu_0, \mathcal{F}) = \\
    & \min_{U (\cdot)} \,\, \sum_{k=0}^{T-1} \hat{x}_k^\top \begin{bmatrix} 100 & 0\\ 0 & 1 \end{bmatrix} \hat{x}_k + \pi_k (\hat{x}_k)^2  \\
    & \,\,\, \textnormal{s.t.,} \,\,\,  x_{k+1}  = A x_k  + B \pi_k (\hat{x}_k) + w_k ,\\
    & \qquad \, \, y_k  = C x_k  + v_k ,  \\
    & \qquad \, \, \hat{x}_{k+1} = f_k(\hat{x}_{k}, y_{k+1}), \\
    & \qquad \, \, \hat{x}_{0} = \mu_0 + L_0 (y_0 - C \mu_0), \\
    & \qquad \, \, \mathbb{P}(\begin{bmatrix} -8 \\ -8 \end{bmatrix} \leq x_{k}  \leq \begin{bmatrix} 80 \\ 40 \end{bmatrix}) \geq 0.95,\\
    & \qquad \, \, |\pi_k (\hat{x}_k)| \leq 5,\\
    & \qquad \, \, x_0  \sim \mathcal{N}(\mu_0 , 0.1 \mathrm{I}), \\ 
    & \qquad \, \, w_k \sim \mathcal{N}(0,0.1 \mathrm{I}), \,\, v_k \sim \mathcal{N}(0,0.1),\\
    & \qquad \, \, k=0,...,T-1, 
\end{split}
\end{equation}
where $T=50$, $A=\begin{bmatrix} 1 & 1\\ 0 & 1 \end{bmatrix}, B=\begin{bmatrix} 0.5 \\ 1 \end{bmatrix}$, $C=\begin{bmatrix} 1 & 0 \end{bmatrix}$, and $\mu_0= \begin{bmatrix} 25 & 0 \end{bmatrix}^\top$. 
For the proposed MPC, with parameters $Q_\mathrm{LQR}=\begin{bmatrix} 100 & 0\\ 0 & 1 \end{bmatrix}$ and $R_\mathrm{LQR} = 1$, the control gain $K$ is chosen to be the optimal LQR gain and the terminal cost is designed $O(x) = x^\top P_\mathrm{LQR} x$ where $P_\mathrm{LQR}$ is a solution of the discrete time algebraic Riccati equation.

We compare the proposed algorithm with two algorithms. 
First, we simply feed an estimated state from the Kalman filter as the true state to a deterministic MPC problem. The formulation of this problem is given in the Appendix. We will refer to this first algorithm as the \emph{baseline}. 
The second algorithm for comparisons is the SMPC-c algorithm in \cite{farina2015approach}. The SMPC-c algorithm in \cite{farina2015approach} is designed to use the Luenberger observer and satisfy chance input constraints, which allow the controller to exert potentially unbounded control inputs.
For the SMPC-c algorithm \cite{farina2015approach}, we set the Luenberger observer with constant observer gain $L_{\infty} =L_k, \forall k \geq 0$ and the input chance constraint as $\mathbb{P}(|\pi_k (\hat{x}_k)| \leq 5) \geq 0.99$ to comply with \cite{farina2015approach}. Note that the constant observer gain $L_{\infty}$ is set to the steady-state Kalman filter gain \cite{simon2006optimal}.

We run 10,000 Monte-Carlo simulations of the system in \eqref{eq:ftocp example} in closed-loop using each algorithm  to investigate task failure rate and constraint violation rate of each algorithm. The results are presented in Table. \ref{table:statistics 2}.
As shown in the Table \ref{table:statistics 2}, the task failure rate of the SMPC-c \cite{farina2015approach} is high compared to the proposed control algorithm by 2000 times.
This is because SMPC-c \cite{farina2015approach} is designed when the systems are not imposed on hard control constraints, but our system is.
The proposed algorithm is designed to regulate the task failure rate when the system is imposed on hard control input constraints, and the results in Table \ref{table:statistics 2}  show that it can regulate the task failure lower than the theoretical value $1-(1-p_f)^{T-1}$.
\begin{table}[h]
\centering
\caption{Task failure rate}
\label{table:statistics 2}
\begin{tabular}{ ||l||c||c||} 
 \hline
Algorithms & Numerical value & $1-(1-p_f)^{T-1}$ \\
 \hline
 Baseline & 0.7284 & - \\
 SMPC-c \cite{farina2015approach} & 0.1731 & - \\ 
 Proposed & $8 \times10^{-4}$ & 0.095 \\ 
 \hline
\end{tabular}
\end{table}

Constraint violation in one closed-loop simulation at time $k$ is defined as  \(x_k \not\in \mathcal{X}\). The desired probability of constraint violation is set in (\ref{eq:ftocp example}) to be lower than \(p_x=0.05\).
We calculate constraint violation rate as dividing the number of the constraint violations in all successful simulations by the number of time steps of successful simulations.
The results are in Table \ref{table:statistics 1}.
For all cases, the violation rate is much lower than the theoretical upper bound because of the convervativeness of the constraint tightening.
\begin{table}[h]
\centering
\caption{Constraint violation rate \(p_x\)}
\label{table:statistics 1}
\begin{tabular}{ ||l||c||c||} 
 \hline
 Algorithms & Numerical value & $p_x$\\
 \hline
 Baseline & 0.0165 & 0.05 \\
 SMPC-c \cite{farina2015approach} & $1.96 \times10^{-4}$ & 0.05 \\ 
 Proposed & $4.00 \times10^{-6}$ & 0.05 \\ 
 \hline
\end{tabular}
\end{table}

\section{CONCLUSIONS}
We proposed an output feedback SMPC for constrained uncertain LTI systems subject to chance constraints on states and hard bounded constraints on inputs. 
The uncertainties included additive disturbance on state and additive measurement noise on input which are Gaussian distributed.
We propose a way to construct rigid \textit{Uniformly Bounded Confidence Sets} and an output feedback SMPC is designed by tightening constraints using these set. 
We prove that the proposed SMPC guaranteed the task completion with no smaller than quantifiable probability while satisfying the imposed chance constraints on states and hard bounded constraints on inputs.
The effectiveness of the proposed output feedback SMPC was investigated via a numerical example and compared with classical output feedback SMPC.

\renewcommand{\baselinestretch}{0.8}
\printbibliography 
\balance
\section*{Appendix}
\subsection{L\"{o}wner-John ellipsoid problem}
We solve the following optimization problem to numerically calculate the uniform upper bound $\psi$ given $\psi_{0:T-1}$:
\begin{equation} \label{lowner john ellipsoid}
\begin{split}
    & \min_{\psi, \tau_0,..., \tau_{T-1}} \mathrm{log~det}~ \Phi\\
    & \qquad \textnormal{s.t.,} \quad \tau_k \geq 0, \\
    & \qquad \qquad \, \, \begin{bmatrix} \psi^{-2} -\tau_k \psi_k^{-1} & 0 \\
    0 & -1+\tau_k\\
    \end{bmatrix} \preccurlyeq 0,\\
    & \qquad \qquad \, \, k = 0,...,T-1.
\end{split}
\end{equation}
The optimal solution for this optimization problem (\ref{lowner john ellipsoid}) \(\psi\) is the upper bound of the error covariance matrix \(\psi_k, \, k = 0,...,T-1\) over the task. 

\subsection{Proof of Lemma \ref{lem: uniformly bounded confidence set}}
\label{Proof of lemma 1}
\begin{proof}
Consider the following joint chance constraint:
\begin{equation} \label{eq:joint chance constraint}
\begin{split}
    & \mathcal{R}_k  = \{r_k : H_r r_k  \leq {h}_{r_k }\}, \\
    & \mathbb{P}(r_k  \notin \mathcal{R}_k ) < p.
\end{split}
\end{equation}
Note that this constraint can also be written as \(\mathbb{P}(H_r r_k  \leq {h}_{r_k }) \geq 1-p\).
By Boole's inequality, the joint chance constraint in (\ref{eq:joint chance constraint}) can be split into multiple individual chance constraints as follows:
\begin{equation} \label{eq:single chance constraint}
\begin{split}
    & \mathbb{P}([H_r]_m r_k  > [{h}_{r_k }]_m) \leq p_m, \\
    & {\scriptstyle\sum}_{m=1}^{n_r} p_{m} = p, \, p_{m} \in (0,1), \\
    & m = 1,...,n_r.
\end{split}
\end{equation}
By Proposition 2 of \cite{farina2015approach}, each constraint in (\ref{eq:single chance constraint}) is verified if
\begin{equation} \label{eq:single chance constraint sufficient condition}
\begin{split}
    & \mathrm{cdf}^{-1}(1-p_{m}) \sqrt{[H_r]_m \Sigma_k  [H_r]_m^\top} < [{h}_{r_k }]_m,\\
    &  p_{m} \in (0,1), \, m = 1,...,n_r.
\end{split}
\end{equation}
Using this fact, the following set satisfies the joint chance constraint in (\ref{eq:joint chance constraint}):
\begin{equation} \label{eq:confidence polytope}
\begin{split}
    & \mathcal{E}_{1-p}^{r_k } = \{r_k  : H_r r_k  \leq \bar{h}_{r_k }\}, \\
    & [\bar{h}_{r_k }]_m = \mathrm{cdf}^{-1}(1-p_{m}) \sqrt{[H_r]_m \Sigma_k  [H_r]_m^\top},\\
    & {\scriptstyle\sum}_{m=1}^{n_r} p_{m} = p, \, p_{m} \in (0,1), \\
    & m = 1,...,n_r.
\end{split}
\end{equation}
Considering \(\Sigma_k  \preccurlyeq \Sigma\), the following inequality is true: \([\bar{h}_{r_k }]_m \leq [\bar{h}_{r}]_m\).
 Thus, \(\mathcal{E}_{1-p}^{r_k } \subseteq \mathcal{E}_{1-p}^{r}, \forall k \in \{0,...,T-1\}\).
This means that \(\mathcal{E}^{r}_{1-p}\) satisfies \(\mathbb{P}(r_k  \in \mathcal{E}^{r}_{1-p}) \geq 1-p\).
 Therefore, the set \(\mathcal{E}_{1-p}^{r}\) satisfies the \textit{Definition} \ref{def:uniform confidence set}, and thus is a uniformly bounded confidence set.
\end{proof}

\subsection{Baseline MPC problem} 
The baseline algorithm consists of two consecutive parts. 
First, given the measurement, the Kalman filter \cite{simon2006optimal} in (\ref{eq:kalman filter}) and (\ref{eq:kalman filter_initialization}) estimates actual state. 
Second, a baseline MPC in (\ref{eq:baseline}) calculates a control input assuming the estimated state as the true state and treating the system as a deterministic system. The MPC problem is:
\begin{equation} \label{eq:baseline}
\begin{split}
    & {V}^{\mathrm{BL}}_{k \rightarrow k+N}(\hat{x}_k ) = \\
    & \min_{\substack{c _{0|k},...c _{N-1|k}}} \sum_{i=0}^{N-1} \ell({x}_{i|k}, c_{i|k}) + O({x}_{N|k})\\
    & ~~~~~ \textnormal{s.t.,} \quad ~~ {x}_{i+1|k} = A {x}_{i|k} + B c_{i|k}, \\
    & ~~~~~~~\qquad \, \, {x}_{0|k} = \hat{x}_k ,\\
    & ~~~~~~~\qquad \, \,  {x}_{i+1|k} \in  \mathcal{X}, \\
    & ~~~~~~~ \qquad \, \,  c_{i|k} \in \mathcal{U},\\
    & ~~~~~~~ \qquad \, \, {x}_{N|k} \in \mathcal{X}_{f},\\
    & ~~~~~~~ \qquad \, \, i=0,1,...,N-1,
\end{split}
\end{equation}
where $\mathcal{X}_{f}$ a positive invariant set for the nominal system in \eqref{eq:baseline} under the terminal control policy $\pi_f(x) = Kx$. We then apply $c^\star_{0|k}$ at time step $k$ to system \eqref{eq:system}.

\end{document}